% ****** Start of file ******
%
%   This file is part of the APS files in the REVTeX 4.2 distribution.
%   Version 4.2a of REVTeX, December 2014
%
%   Copyright (c) 2014 The American Physical Society.
%
%   See the REVTeX 4 README file for restrictions and more information.
%
% TeX'ing this file requires that you have AMS-LaTeX 2.0 installed
% as well as the rest of the prerequisites for REVTeX 4.2
%
% See the REVTeX 4 README file
% It also requires running BibTeX. The commands are as follows:
%
%  1)  latex apssamp.tex
%  2)  bibtex apssamp
%  3)  latex apssamp.tex
%  4)  latex apssamp.tex
%
\documentclass[%
reprint,
%superscriptaddress,
%groupedaddress,
%unsortedaddress,
%runinaddress,
%frontmatterverbose, 
%preprint,
%preprintnumbers,
nofootinbib,
%nobibnotes,
%bibnotes,
 amsmath,amssymb,
 aps,
 pra,
%prb,
%rmp,
%prstab,
%prstper,
%floatfix,
showkeys,
]{revtex4-2}

\usepackage{mathrsfs}

\usepackage{graphicx}% Include figure files
\usepackage{dcolumn}% Align table columns on decimal point
\usepackage{bm}% bold math
%\usepackage{hyperref}% add hypertext capabilities
%\usepackage[mathlines]{lineno}% Enable numbering of text and display math
%\linenumbers\relax % Commence numbering lines

\usepackage{amsthm}
\usepackage{xfrac}
\usepackage[mathcal]{euscript}
\usepackage{tikz}
\usetikzlibrary{trees,arrows,decorations.pathreplacing}

%\usepackage[showframe,%Uncomment any one of the following lines to test 
%%scale=0.7, marginratio={1:1, 2:3}, ignoreall,% default settings
%%text={7in,10in},centering,
%%margin=1.5in,
%%total={6.5in,8.75in}, top=1.2in, left=0.9in, includefoot,
%%height=10in,a5paper,hmargin={3cm,0.8in},
%]{geometry}

\usepackage{url}
\usepackage{hyperref}
\usepackage{color}
\definecolor{refcolor}{RGB}{0,0,190}
\hypersetup{
    colorlinks,
    citecolor=refcolor,
    filecolor=refcolor,
    linkcolor=refcolor,
    urlcolor=refcolor
}

\usepackage{booktabs}

\usepackage{bookmark}
\bookmarksetup{
  numbered, 
  open,
}

\newtheorem{theorem}{Theorem}

\newtheorem{corollary}{Corollary}

\theoremstyle{remark}
\newtheorem{remark}{Remark}
\newtheorem{example}{Example}

\theoremstyle{definition}
\newtheorem{observation}{Observation}

\newtheorem{objection}{Objection}
\newtheorem{reply}{Reply}

\theoremstyle{definition}

\renewcommand{\thedefCustom}{\arabic{definition}}
\makeatletter
\newcommand{\setdefCustomtag}[1]{% \setdefCustomtag{<tag>}
  \let\oldthedefCustom\thedefCustom% Store \thedefCustom
  \renewcommand{\thedefCustom}{#1}% Redefine it to a fixed value
  \g@addto@macro\enddefCustom{% At \end{defCustom}, ...
    %\addtocounter{defCustom}{-1}% ...restore defCustom counter value and...
    \global\let\thedefCustom\oldthedefCustom}% ...restore \thetheorem
  }
\makeatother

\renewcommand{\theassumption}{\arabic{assumption}}
\makeatletter
\newcommand{\setassumptiontag}[1]{% \setassumptiontag{<tag>}
  \let\oldtheassumption\theassumption% Store \theassumption
  \renewcommand{\theassumption}{#1}% Redefine it to a fixed value
  \g@addto@macro\endassumption{% At \end{assumption}, ...
    %\addtocounter{assumption}{-1}% ...restore assumption counter value and...
    \global\let\theassumption\oldtheassumption}% ...restore \thetheorem
  }
\makeatother

\renewcommand{\theclaim}{\arabic{claim}}
\makeatletter
\newcommand{\setclaimtag}[1]{% \setclaimtag{<tag>}
  \let\oldtheclaim\theclaim% Store \theclaim
  \renewcommand{\theclaim}{#1}% Redefine it to a fixed value
  \g@addto@macro\endclaim{% At \end{claim}, ...
    %\addtocounter{claim}{-1}% ...restore claim counter value and...
    \global\let\theclaim\oldtheclaim}% ...restore \thetheorem
  }
\makeatother

\theoremstyle{remark}

\renewcommand{\thepointItem}{\quad\arabic{pointItem}}
\makeatletter
\newcommand{\setpointItemtag}[1]{% \setpointItemtag{<tag>}
  \let\oldthepointItem\thepointItem% Store \thepointItem
  \renewcommand{\thepointItem}{#1}% Redefine it to a fixed value
  \g@addto@macro\endpointItem{% At \end{pointItem}, ...
    %\addtocounter{pointItem}{-1}% ...restore pointItem counter value and...
    \global\let\thepointItem\oldthepointItem}% ...restore \thetheorem
  }
\makeatother

%\preprint{xxxxxx}

\begin{document}

%--------------------------------------------------------
% Defines

\newcommand{\pbref}[1]{\ref{#1} (\nameref*{#1})}
   
\def\({\big(}
\def\){\big)}

\newcommand{\tn}{\textnormal}
\newcommand{\ds}{\displaystyle}
\newcommand{\dsfrac}[2]{\displaystyle{\frac{#1}{#2}}}

\newcommand{\boplus}{\textstyle{\bigoplus}}
\newcommand{\botimes}{\textstyle{\bigotimes}}
\newcommand{\bcup}{\textstyle{\bigcup}}
\newcommand{\bsqcup}{\textstyle{\bigsqcup}}
\newcommand{\bcap}{\textstyle{\bigcap}}

\newcommand{\MQS}{\ref{def:MQS}}

\newcommand{\struct}{\mc{S}}
\newcommand{\kind}{\mc{K}}

\newcommand{\dddots}{\rotatebox[origin=t]{135}{$\cdots$}}

\newcommand{\statespace}{\mathcal{S}}
\newcommand{\hilbert}{\mathcal{H}}
\newcommand{\vectorspace}{\mathcal{V}}
\newcommand{\mc}[1]{\mathcal{#1}}
\newcommand{\ms}[1]{\mathscr{#1}}
\newcommand{\mf}[1]{\mathfrak{#1}}
\newcommand{\dU}{\wh{\mc{U}}}

\newcommand{\wh}[1]{\widehat{#1}}
\newcommand{\dwh}[1]{\wh{\rule{0ex}{1.3ex}\smash{\wh{\hfill{#1}\,}}}}

\newcommand{\wt}[1]{\widetilde{#1}}
\newcommand{\wht}[1]{\widehat{\widetilde{#1}}}
\newcommand{\on}[1]{\operatorname{#1}}

\newcommand{\qmU}{$\mathscr{U}$}
\newcommand{\qmR}{$\mathscr{R}$}
\newcommand{\qmUR}{$\mathscr{UR}$}
\newcommand{\qmDR}{$\mathscr{DR}$}

\newcommand{\R}{\mathbb{R}}
\newcommand{\C}{\mathbb{C}}
\newcommand{\Z}{\mathbb{Z}}
\newcommand{\K}{\mathbb{K}}
\newcommand{\N}{\mathbb{N}}
\newcommand{\Prj}{\mathcal{P}}
\newcommand{\abs}[1]{|#1|}

\newcommand{\de}{\operatorname{d}}
\newcommand{\tr}{\operatorname{tr}}
\newcommand{\im}{\operatorname{Im}}

\newcommand{\ie}{\textit{i.e.}\ }
\newcommand{\vs}{\textit{vs.}\ }
\newcommand{\eg}{\textit{e.g.}\ }
\newcommand{\cf}{\textit{cf.}\ }
\newcommand{\etc}{\textit{etc}}
\newcommand{\etal}{\textit{et al.}}

\newcommand{\Span}{\tn{span}}
\newcommand{\pde}{PDE}
\newcommand{\U}{\tn{U}}
\newcommand{\SU}{\tn{SU}}
\newcommand{\GL}{\tn{GL}}

\newcommand{\schrod}{Schr\"odinger}
\newcommand{\vonneum}{Liouville-von Neumann}
\newcommand{\ks}{Kochen-Specker}
\newcommand{\leggarg}{Leggett-Garg}
\newcommand{\bra}[1]{\langle#1|}
\newcommand{\ket}[1]{|#1\rangle}
\newcommand{\kett}[1]{|\!\!|#1\rangle\!\!\rangle}
\newcommand{\proj}[1]{\ket{#1}\bra{#1}}
\newcommand{\braket}[2]{\langle#1|#2\rangle}
\newcommand{\ketbra}[2]{|#1\rangle\langle#2|}
\newcommand{\expectation}[1]{\langle#1\rangle}
\newcommand{\Herm}{\tn{Herm}}
\newcommand{\Sym}[1]{\tn{Sym}_{#1}}
\newcommand{\meanvalue}[2]{\langle{#1}\rangle_{#2}}
\newcommand{\Prob}{\tn{Prob}}
\newcommand{\kjj}[3]{#1\!:\!#2,#3}
\newcommand{\jk}[2]{#1,#2}
\newcommand{\JK}{\mf{j}}

\newcommand{\weightU}[5]{\big[{#2}{}_{#3}\overset{#1}{\rightarrow}{#4}{}_{#5}\big]}
\newcommand{\weightUT}[8]{\big[{#3}{}_{#4}\overset{#1}{\rightarrow}{#5}{}_{#6}\overset{#2}{\rightarrow}{#7}{}_{#8}\big]}
\newcommand{\weight}[4]{\weightU{}{#1}{#2}{#3}{#4}}
\newcommand{\weightT}[6]{\weightUT{}{}{#1}{#2}{#3}{#4}{#5}{#6}}

\newcommand{\btimes}{\boxtimes}
\newcommand{\btimess}{{\boxtimes_s}}

\newcommand{\h}{\mathbf{(2\pi\hbar)}}
\newcommand{\x}{\mathbf{x}}
\newcommand{\xThree}{\boldsymbol{x}}
\newcommand{\z}{\mathbf{z}}
\newcommand{\q}{\mathbf{q}}
\newcommand{\p}{\mathbf{p}}
\newcommand{\0}{\mathbf{0}}
\newcommand{\annih}{\widehat{\mathbf{a}}}

\newcommand{\cs}{\mathscr{C}}
\newcommand{\ps}{\mathscr{P}}
\newcommand{\xhat}{\widehat{\x}}
\newcommand{\phat}{\widehat{\mathbf{p}}}
\newcommand{\fqproj}[1]{\Pi_{#1}}
\newcommand{\cqproj}[1]{\wh{\Pi}_{#1}}
\newcommand{\cproj}[1]{\wh{\Pi}^{\perp}_{#1}}

\newcommand{\M}{\mathbb{E}_3}
\newcommand{\D}{\mathbf{D}}
\newcommand{\dn}{\tn{d}}
\newcommand{\db}{\mathbf{d}}
\newcommand{\n}{\mathbf{n}}
\newcommand{\m}{\mathbf{m}}
\newcommand{\V}[1]{\mathbb{V}_{#1}}
\newcommand{\F}[1]{\mathcal{F}_{#1}}
\newcommand{\Fvacuumfield}{\widetilde{\mathcal{F}}^0}
\newcommand{\nD}[1]{|{#1}|}
\newcommand{\Lin}{\mathcal{L}}
\newcommand{\End}{\tn{End}}
\newcommand{\vbundle}[4]{{#1}\to {#2} \stackrel{\pi_{#3}}{\to} {#4}}
\newcommand{\vbundlex}[1]{\vbundle{V_{#1}}{E_{#1}}{#1}{M_{#1}}}
\newcommand{\rep}{\rho_{\scriptscriptstyle\btimes}}

\newcommand{\intl}[1]{\int\limits_{#1}}

\newcommand{\moyalBracket}[1]{\{\mskip-5mu\{#1\}\mskip-5mu\}}

\newcommand{\Hint}{H_{\tn{int}}}

\newcommand{\quot}[1]{``#1''}

\def\sref #1{\S\ref{#1}}

\newcommand{\dBB}{de Broglie--Bohm}
\newcommand{\dBBt}{{\dBB} theory}
\newcommand{\pwt}{pilot-wave theory}
\newcommand{\PWT}{PWT}
\newcommand{\NRQM}{{\textbf{NRQM}}}

\newcommand{\image}[3]{
\begin{center}
\begin{figure}[!ht]
\includegraphics[width=#2\textwidth]{#1}
\caption{\small{\label{#1}#3}}
\end{figure}
\end{center}
\vspace{-0.40in}
}

\newcommand{\todo}[1]{\textcolor{red}{$\Rightarrow$} \textcolor{blue}{#1}\PackageWarning{TODO:}{#1!}}

\title{The Problem of Irreversible Change in Quantum Mechanics}

%\title{Branching structures in Many-Worlds and Pilot-Wave theories require very special ``conspiring'' evolution law, initial conditions, and macro observables}

\author{Ovidiu Cristinel Stoica}
% \homepage{http://www.Second.institution.edu/~Charlie.Author}
\affiliation{
 Dept. of Theoretical Physics, NIPNE---HH, Bucharest, Romania. \\
	Email: \href{mailto:cristi.stoica@theory.nipne.ro}{cristi.stoica@theory.nipne.ro},  \href{mailto:holotronix@gmail.com}{holotronix@gmail.com}
	}%

\date{\today}% It is always \today, today,
             %  but any date may be explicitly specified

\begin{abstract}
I prove that, if a change happens in a closed quantum system so that its state is perfectly distinguishable from all past or future states, the Hamiltonian is $\widehat{H}=-i\hbar\frac{\partial\ }{\partial\tau}$. A time operator $\widehat{\tau}$ can be defined as its canonical conjugate. This Hamiltonian is usually rejected because it has no ground state, but I show that even a weaker form of irreversibility is inconsistent with a ground state.

%Therefore, either there is no ground state and even the dynamics is just a translation, or there are no irreversible changes and there is always a chance of spontaneous time travel in the past.

What is the right choice, that the world's Hamiltonian is $-i\hbar\frac{\partial\ }{\partial\tau}$, or that changes are reversible?
\end{abstract}

\keywords{Quantum measurement; Schr\"odinger equation; time operator; no-go theorem}%Use showkeys class option if keyword
                              %display desired

\maketitle
%\tableofcontents

%%%%%%%%%%%%%%%%%%%%%%%%%%%%%%%%%%
\section{Introduction}
\label{s:intro}

Our universe is in a permanent change, at any time its state is different from any past or future state.
It continuously expands, its Boltzmann entropy increases, matter rearranges itself in new patterns.
Every quantum measurement leads to irreversible changes as well.

In this article, I look into the constraints imposed by irreversibility to the physical law itself.
In Sec. \sref{s:change} I prove that even a single irreversible change requires the Hamiltonian to have the form
\begin{equation}
\label{eq:time_translation_Hamiltonian}
\widehat{H}=-i\hbar\frac{\partial\ }{\partial\tau}.
\end{equation}

This leads to a dilemma: we expect the Hamiltonian of our world to be more complex and to have a ground state, we also expect that irreversible changes happen all the time, but we cannot have both.

I show how several attempts to avoid the dilemma fail:

In Sec. \sref{s:weak} I show that the Hamiltonian cannot have a ground state even if the irreversible change is allowed to take some time to happen, and I argue that using POVM cannot avoid the problem.

In Sec. \sref{s:no_exit} I show that, although the collapse postulate or decoherence can be used to escape into a larger subspace, this does not avoid the dilemma.

Some arguments against the Hamiltonian \eqref{eq:time_translation_Hamiltonian} are discussed in Sec. \sref{s:objections}, based on examples of quantum systems proved in \cite{Stoica2022VersatilityOfTranslations} to have the Hamiltonian of the form \eqref{eq:time_translation_Hamiltonian}.
These examples include
the Hamiltonian of the standard model of ideal quantum measurements,
any quantum world in which there is a perfect clock or a sterile massless fermion in a certain state,
and the quantum representations of all deterministic time-reversible dynamical systems without time loops.

Section \sref{s:conclusions} summarizes the conclusions.

%%%%%%%%%%%%%%%%%%%%%%%%%%%%%%%%%%
\section{Irreversible Change Theorem}
\label{s:change}

Consider a closed quantum system, represented by a state vector $\ket{\psi}$ in a Hilbert space $\hilbert$, which evolves according to the {\schrod} equation
\begin{equation}
\label{eq:schrod}
i\hbar\frac{d\ }{d t}\ket{\psi(t)}=\wh{H}\ket{\psi(t)}.
\end{equation}

The Hamiltonian operator $\wh{H}$ is time-independent.
For the initial condition $\ket{\psi(0)}=\ket{\psi_0}$,
\begin{equation}
	\label{eq:unitary_evolution}
	\ket{\psi(t)}=\wh{U}(t)\ket{\psi_0}, \tn{ where }
	\wh{U}(t):=e^{-\frac{i}{\hbar}t\wh{H}}.
\end{equation}

Quantum states are not directly observable. All we can observe are properties of the macro states. For example, quantum measurements lead to macroscopic changes of the pointer state, which can then be observed.
Macro states are classes of equivalence of micro states, represented by subspaces $\hilbert_k$ in the Hilbert space $\hilbert$, or, equivalently, by projectors $\wh{P}_k$ determining them, $\hilbert_k=\wh{P}_k\hilbert$.
The system evolves through a succession of macro states.

Irreversibility of change requires that, once the system leaves a macro state, it never revisits it again. This means that all its macro states at distinct times should be orthogonal.
If strict orthogonality fails, the state vector has at least a component whose macro state is a past or a future macro state, therefore there is a nonzero probability of spontaneous time travel in the past or in the future.
Moreover, even when the state vector evolves unitarily into a superposition of macroscopically distinct states, as it happens due to measurements, irreversibility requires all these macro states to be distinct from the past ones.
We are led to the following Observation:
\begin{observation}
\label{obs:ortho}
Once a system leaves a macro state, if its state is merely different, and not strictly orthogonal to its past states, it can return to a previous macro state.
\end{observation}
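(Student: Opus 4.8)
The plan is to reduce this physical statement to a one-line Born-rule inequality. First I would fix two instants $t_1<t_2$ and suppose the system occupies the macro state $k$ at $t_1$, i.e.\ $\ket{\psi(t_1)}\in\hilbert_k=\wh{P}_k\hilbert$, equivalently $\wh{P}_k\ket{\psi(t_1)}=\ket{\psi(t_1)}$; by the unitarity in \eqref{eq:unitary_evolution}, $\ket{\psi(t_2)}=\wh{U}(t_2-t_1)\ket{\psi(t_1)}$. Next I would fix the reading of the informal terms: ``its state is merely different, and not strictly orthogonal to its past state'' means $\braket{\psi(t_1)}{\psi(t_2)}\neq 0$ --- two unit vectors are perfectly distinguishable precisely when this overlap vanishes --- while ``it can return to a previous macro state'' means that measuring the macro observable at $t_2$ yields the outcome $k$ with nonzero probability, i.e.\ $\|\wh{P}_k\ket{\psi(t_2)}\|^2>0$.

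The core step is then immediate. Because $\ket{\psi(t_1)}=\wh{P}_k\ket{\psi(t_1)}$ and $\wh{P}_k$ is a self-adjoint idempotent, $\braket{\psi(t_1)}{\psi(t_2)}=\bra{\psi(t_1)}\wh{P}_k\ket{\psi(t_2)}$, so by Cauchy--Schwarz with $\|\ket{\psi(t_1)}\|=1$,
\begin{equation}
\label{eq:recurrence_bound}
\big\|\wh{P}_k\ket{\psi(t_2)}\big\|^2\ \geq\ \abs{\braket{\psi(t_1)}{\psi(t_2)}}^2\ >\ 0 .
\end{equation}
Hence the Born probability of finding the system back in the macro state it previously occupied is strictly positive: it can revisit macro state $k$. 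The same inequality also covers the measurement case noted before the Observation --- if $\ket{\psi(t_2)}$ is a unitary superposition over several macro states, any branch with nonzero overlap onto a past macro subspace contributes a strictly positive recurrence probability --- and its contrapositive supplies the converse used afterward: non-recurrence at all distinct pairs of times forces the occupied macro subspaces to be mutually orthogonal.

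I do not expect a genuine mathematical obstacle here; \eqref{eq:recurrence_bound} is a single application of Cauchy--Schwarz together with positivity of the Born rule. The only point needing care is conceptual rather than technical: justifying that ``perfectly distinguishable from all past or future states'' must be read as \emph{exact} orthogonality of the relevant subspaces, with no tolerance. By \eqref{eq:recurrence_bound}, any nonzero overlap --- however small --- produces a strictly positive probability of spontaneous return, so no ``approximate'' notion of irreversibility is available, and it is precisely this all-or-nothing character that I expect to drive the rigidity $\wh{H}=-i\hbar\,\partial/\partial\tau$ in the next section.
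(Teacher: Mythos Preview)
Your argument is correct. The paper does not actually supply a proof of the Observation: it is stated as a physical remark, with the only justification being the sentence immediately preceding it (``If strict orthogonality fails, the state vector has at least a component whose macro state is a past or a future macro state, therefore there is a nonzero probability of spontaneous time travel in the past or in the future''). Your Cauchy--Schwarz bound \eqref{eq:recurrence_bound} is precisely the rigorous content behind that sentence: it turns ``has a component in a past macro state'' into the sharp inequality $\|\wh{P}_k\ket{\psi(t_2)}\|^2\ge|\braket{\psi(t_1)}{\psi(t_2)}|^2$, which is more than the paper itself writes down. So you are not taking a different route; you are supplying the route the paper leaves implicit.

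One small overreach in your final paragraph: the contrapositive of \eqref{eq:recurrence_bound} gives that the \emph{micro} state $\ket{\psi(t_2)}$ is orthogonal to the past macro subspace $\hilbert_k$, not that the occupied macro subspaces themselves are mutually orthogonal. That stronger statement would require non-recurrence for \emph{every} initial micro state in $\hilbert_k$, not just the one on the given trajectory. This is exactly why the paper, in the sentence after the Observation, passes to the weaker hypothesis of micro-state orthogonality for Theorem~\ref{thm:translation}; your contrapositive already delivers what that Theorem needs.
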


For more generality, I will assume orthogonality of micro rather than macro states.
This condition is weaker, because orthogonality of macro states implies orthogonality of micro states, but the converse is not true.

\begin{theorem}
\label{thm:translation}
If the state of a quantum system at $t=0$ is perfectly distinguishable from all its past (or future) states, then the states in its history span a Hilbert space on which the restriction of its Hamiltonian is $-i\hbar\frac{\partial\ }{\partial\tau}$.
\end{theorem}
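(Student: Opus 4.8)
The plan is to run everything through Stone's theorem and the spectral theorem. Write $\wh{U}(t)=e^{-\frac{i}{\hbar}t\wh{H}}$ and let $g(t):=\braket{\psi_0}{\psi(t)}=\bra{\psi_0}\wh{U}(t)\ket{\psi_0}$ be the autocorrelation function of the history. The hypothesis says $g(t)=0$ for all $t<0$ (the ``past'' case; the ``future'' case is identical under $t\mapsto-t$). The first step is to upgrade this to a statement about the whole history: from $\wh{U}(s)^{\dagger}\wh{U}(t)=\wh{U}(t-s)$ one gets $\braket{\psi(s)}{\psi(t)}=g(t-s)$, and from $g(-t)=\overline{g(t)}$ one gets $g(t)=0$ for \emph{every} $t\neq0$ (this also rules out time loops, for which $g$ would equal $1$ at some nonzero time). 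Equivalently, writing $g(t)=\int_{\R}e^{-\frac{i}{\hbar}tE}\,d\mu(E)$ with $\mu$ the spectral measure of $\ket{\psi_0}$ under $\wh{H}$, the Fourier transform $\wh{\mu}$ is supported at the origin, so $\mu$ is a multiple of Lebesgue measure on all of $\R$: on the cyclic subspace generated by $\ket{\psi_0}$, $\wh{H}$ has purely absolutely continuous spectrum filling $\R$ with multiplicity one.

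Next I would construct the identification the theorem claims. Let $V:=\overline{\Span}\{\ket{\psi(t)}:t\in\R\}$ be the cyclic subspace of $\ket{\psi_0}$; it is $\wh{U}(t)$-invariant, hence $\wh{H}$-invariant, so $\wh{H}|_V$ is a well-defined self-adjoint operator. By the spectral theorem applied to the cyclic vector, $V$ is unitarily equivalent to $L^2(\R,dE)$ with $\wh{H}|_V$ acting as multiplication by $E$; passing to the Fourier-conjugate variable $\tau$ turns this into $L^2(\R,d\tau)$ with $\wh{H}|_V=-i\hbar\frac{\partial}{\partial\tau}$, under which $\ket{\psi(t)}$ corresponds to $\delta(\tau-t)$, $\wh{U}(t)$ acts as the translation $\tau\mapsto\tau-t$ along the history, and the canonically conjugate time operator $\wh{\tau}$ (multiplication by $\tau$) satisfies $[\wh{\tau},\wh{H}]=i\hbar$. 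One can also obtain the commutator directly from the orthogonal family: setting $\wh{\tau}\ket{\psi(t)}:=t\ket{\psi(t)}$ gives $[\wh{\tau},\wh{U}(t)]=t\,\wh{U}(t)$, and differentiating at $t=0$ yields $[\wh{\tau},\wh{H}]=i\hbar$, after which (formal) Stone--von Neumann, together with multiplicity one, pins the representation down to $\wh{H}=-i\hbar\frac{\partial}{\partial\tau}$.

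The step I expect to be the real obstacle is normalization. For a bona fide unit vector, $g$ is continuous with $g(0)=1$, so it cannot vanish on a half-line; the hypothesis can only be met by an improper, ``$\delta$-normalized'' $\ket{\psi_0}$ playing the role of a ``time eigenstate'' $\ket{\tau=0}$, so that $\ket{\psi_0}$ and the $\ket{\psi(t)}$ are generalized vectors and the genuine Hilbert space is only their completed span. The clean way to keep this rigorous is to work throughout with the spectral measure $\mu$ (allowed to be infinite) and the cyclic representation it generates, rather than with $\ket{\psi_0}$ as a vector, or else to approximate by normalized states whose autocorrelation vanishes outside a window shrinking to $\{0\}$. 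Two subsidiary points to nail down are the distributional Fourier-uniqueness statement (``$\wh{\mu}$ supported at $0$'' $\Rightarrow$ ``$\mu\propto dE$'') and the observation that the argument constrains nothing outside $V$, which is why the theorem is necessarily phrased as a statement about the restriction $\wh{H}|_V$.
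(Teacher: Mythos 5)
Your proof is correct and arrives at the same conclusion as the paper, but by a somewhat different and more detailed route. Your first step is exactly the paper's: from the group law, $\braket{\psi(t_2)}{\psi(t_1)}=\bra{\psi_0}\wh{U}(t_1-t_2)\ket{\psi_0}=0$ for $t_1<t_2$, so all history states are mutually orthogonal. From there the paper is brief: it declares $\(\ket{\psi(t)}\)_{t\in\R}$ a continuum basis of $\mc{V}=\Span\{\ket{\psi(t)}\,|\,t\in\R\}\cong L^2(\R,\C)$, observes that $\(\wh{U}(t)\)_t$ acts on it as the translation group, and invokes Stone's theorem to identify the generator as $-i\hbar\frac{\partial\ }{\partial\tau}$. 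You instead go through the spectral measure of $\ket{\psi_0}$: vanishing of the autocorrelation off $t=0$ forces absolutely continuous, multiplicity-one spectrum filling $\R$ on the cyclic subspace, and Fourier conjugation of the multiplication operator yields $-i\hbar\frac{\partial\ }{\partial\tau}$. The paper's route is shorter; yours makes explicit where the identification $\mc{V}\cong L^2(\R,d\tau)$ actually comes from, and — more importantly — it surfaces the normalization caveat the paper passes over in silence: a genuine unit vector has a continuous autocorrelation with $g(0)=1$, so the hypothesis can only be met by improper, $\delta$-normalized history states (consistently with the $\ket{\psi(\tau)}$ reappearing as generalized eigenstates of $\wh{\tau}$ in the corollary), and your suggestion to phrase the argument in terms of the (possibly infinite) spectral measure, or via normalized approximants, is the cleanest way to make the statement rigorous. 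One small correction that does not affect the conclusion: ``$\hat{\mu}$ supported at the origin'' only forces $\hat{\mu}$ to be a finite combination of $\delta$ and its derivatives, i.e.\ $\mu$ equivalent to Lebesgue measure with polynomial density, not necessarily a constant multiple of it; since multiplication by $E$ on $L^2(\R,p(E)\,dE)$ with $p>0$ a.e.\ is still unitarily equivalent to multiplication by $E$ on $L^2(\R,dE)$, the restriction of $\wh{H}$ is still $-i\hbar\frac{\partial\ }{\partial\tau}$, and the exact multiple is fixed only if one imposes the $\delta$-normalization convention $\braket{\psi(s)}{\psi(t)}=\delta(t-s)$.
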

\begin{proof}
Assume that $\ket{\psi(t)}\perp \ket{\psi(0)}$ for all $t<0$ (the case $t>0$ is similar).
Let $t_1 < t_2\in\R$.
Then, $t_1-t_2<0$, and since $\ket{\psi(t_1-t_2)}\perp\ket{\psi(0)}$,
\begin{equation}
\label{eq:past_future_spaces_perp}
\begin{aligned}
\braket{\psi(t_2)}{\psi(t_1)}
&=\bra{\psi(0)}\wh{U}(t_2)^\dagger\wh{U}(t_1)\ket{\psi(0)}\\
&=\bra{\psi(0)}\wh{U}(t_1-t_2)\ket{\psi(0)}\\
&=0.
\end{aligned}
\end{equation}

Therefore, $\(\ket{\psi(t)}\)_{t\in\R}$ is a complete basis of the vector space $\mc{V}:=\Span\{\ket{\psi(t)}|t\in\R\}\cong L^2(\R,\C)$, which is thus an invariant subspace for $\(\wh{U}(t)\)_{t\in\R}$.
Since $\mc{V}$ is a closed subspace of $\hilbert$, it is a Hilbert space. 
The one-parameter group $\(\wh{U}(t)\)_{t\in\R}$ acts on $\mc{V}$ like a translation group. From Stone's Theorem \cite{Stone1932OnOneParameterUnitaryGroupsInHilbertSpace}, the unique Hamiltonian operator generating its restriction to $\mc{V}$ is $-i\hbar\frac{\partial\ }{\partial\tau}$.
\end{proof}

Theorem \ref{thm:translation} shows that if even a single irreversible change occurs, on the Hilbert space spanned by the history of the state the Hamiltonian is $-i\hbar\frac{\partial\ }{\partial\tau}$.

\begin{corollary}
\label{thm:time_operator}
With the assumptions of Theorem \ref{thm:translation}, the Hamiltonian admits a canonical conjugate $\wh{\tau}$.
For each $\tau\in\R$, $\ket{\psi(\tau)}$ is an eigenstate corresponding to the eigenvalue $\tau$ of $\wh{\tau}$, so $\wh{\tau}$ is a \emph{time operator}.
\end{corollary}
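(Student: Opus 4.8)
The plan is to work inside the concrete realization of $\mc{V}$ furnished by the proof of Theorem~\ref{thm:translation}. There the unitary group $\big(\wh{U}(t)\big)_{t\in\R}$ of \eqref{eq:unitary_evolution} acts on $\mc{V}\cong L^2(\R,\C)$ as the translation $f(\tau)\mapsto f(\tau-t)$, its generator is $\wh{H}=-i\hbar\frac{\partial\ }{\partial\tau}$, and in that identification the history vector $\ket{\psi(t)}$ sits at the point $\tau=t$ (consistently with \eqref{eq:past_future_spaces_perp}, which makes $\big(\ket{\psi(t)}\big)_{t\in\R}$ a $\delta$-orthogonal family of point kets). In this picture the natural candidate for the conjugate is the multiplication operator $\wh{\tau}\colon f(\tau)\mapsto\tau f(\tau)$, self-adjoint on its natural domain, with purely continuous spectrum $\R$; its Fourier-dual picture is the energy representation, in which $\wh{\tau}=i\hbar\frac{\partial\ }{\partial E}$, mirroring $\wh{x}=i\hbar\frac{\partial\ }{\partial p}$.

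First I would check that $\wh{\tau}$ is densely defined and self-adjoint on $\mc{V}$, and then verify the canonical commutation relation on a dense invariant domain by the one-line computation $\wh{\tau}\wh{H}f-\wh{H}\wh{\tau}f=\tau(-i\hbar f')-(-i\hbar)(\tau f)'=i\hbar f$; equivalently, in integrated form, $\wh{U}(t)^\dagger\wh{\tau}\wh{U}(t)=\wh{\tau}+t$, so that $\wh{H}$ generates shifts of $\wh{\tau}$. This is precisely the statement that $\wh{\tau}$ is a canonical conjugate of $\wh{H}$. Next I would identify the eigenstates: since $\ket{\psi(\tau)}=\wh{U}(\tau)\ket{\psi(0)}$ and $\ket{\psi(0)}$ is the point ket at $\tau=0$, translation by $\tau$ carries it to the point ket at $\tau$, whence $\wh{\tau}\ket{\psi(\tau)}=\tau\ket{\psi(\tau)}$; consistently, the integrated relation gives $\meanvalue{\wh{\tau}}{\psi(\tau)}=\meanvalue{\wh{\tau}}{\psi(0)}+\tau$, which is $\tau$ because $\ket{\psi(0)}$ sits at the origin. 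Hence the value $\wh{\tau}$ records on the state the system actually occupies at {\schrod} time $\tau$ is $\tau$ itself, which is the defining property of a \emph{time operator}, and $\wh{\tau}$ stands to the energy $\wh{H}$ exactly as position stands to momentum.

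The main obstacle is not the algebra but the functional-analytic bookkeeping forced by the continuum of mutually orthogonal history vectors. Since $\wh{\tau}$ has purely continuous spectrum and hence no normalizable eigenvectors, the equation $\wh{\tau}\ket{\psi(\tau)}=\tau\ket{\psi(\tau)}$ must be read in the rigged-Hilbert-space (Gelfand-triple) sense, with $\ket{\psi(\tau)}$ a generalized eigenvector --- which is exactly the sense already implicit in the ``complete basis'' step of Theorem~\ref{thm:translation}, where the $\ket{\psi(t)}$ are $\delta$-normalized rather than genuine unit vectors; correspondingly $[\wh{\tau},\wh{H}]=i\hbar$ is stated most safely in its exponentiated (Weyl) form, which sidesteps the usual domain pathologies. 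I would close by noting that the absence of a normalizable $\wh{\tau}$-eigenstate, together with the unboundedness of $\wh{H}$ from below that it reflects (Pauli's objection), is not a defect of the construction but precisely the tension analysed in the rest of the paper; it affects neither the existence of the conjugate $\wh{\tau}$ nor its interpretation as a time operator.
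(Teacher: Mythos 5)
Your proof is correct and in substance the same as the paper's: the operator you construct --- multiplication by $\tau$ in the $L^2(\R,\C)$ realization of $\mc{V}$, with the history kets as its ($\delta$-normalized) eigenvectors --- is exactly the canonical conjugate the paper obtains by invoking the Stone--von Neumann/Weyl theorem, whose representation it writes as $\bra{\tau}\wh{\tau}\ket{\psi}=\tau\braket{\tau}{\psi}$. The only differences are that you verify self-adjointness and the commutation relation (in Weyl form) explicitly rather than citing the theorem, and that you make explicit the rigged-Hilbert-space reading of the eigenvalue equation, which the paper leaves implicit.
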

\begin{proof}
Since  $-i\hbar\frac{\partial\ }{\partial\tau}$ generates translations between orthogonal state vectors, the Stone-von Neumann Theorem implies that it admits a canonical conjugate, whose representation is $\bra{\tau}\wh{\tau}\ket{\psi} = \tau\braket{\tau}{\psi}$, $\tau\in\R$  \cite{Weyl1927QuantenmechanikUndGruppentheorie}. So $\ket{\psi(\tau)}$ are the eigenstates of $\wh{\tau}$, which makes it a time operator.
\end{proof}

In the following, I will discuss whether and at what costs it is possible to avoid the conclusion of Theorem \ref{thm:translation}.

%%%%%%%%%%%%%%%%%%%%%%%%%%%%%%%%%%
\section{Do weaker assumptions help?}
\label{s:weak}

Theorem \ref{thm:translation} implies that the Hamiltonian has no ground state.
Can this conclusion be avoided by weakening the assumption of Theorem \ref{thm:translation}, that the state is perfectly distinguishable from its past or future states? I will look into two possible ways to do this.

One way is to allow the irreversible changes take some time to happen.
The minimum time for a quantum system to evolve between two distinguishable states is called \emph{quantum speed limit} (QSL)
\cite{MargolusLevitin1998MaximumSpeedOfDynamicalEvolution,DeffnerCampbell2017QuantumSpeedLimits}. For a system with the Hamiltonian from eq. \eqref{eq:time_translation_Hamiltonian} and whose state is an eigenstate of $\wh{\tau}$, the QSL is zero, the mean energy is undetermined and its variance is infinite.
From the most cases studied in the literature we may think that the QSL has to be positive, but this is of course because they have different Hamiltonians than \eqref{eq:time_translation_Hamiltonian}.

Now we will see that even if the QSL is strictly positive, irreversibility of change still implies that the system cannot have a ground state.

Unruh and Wald \cite{UnruhWald1989TimeAndTheInterpretationOfCanonicalQuantumGravity} showed that a weaker time operator is not possible for a system having a ground state, \ie there is no self-adjoint operator $\wh{T}$ so that the system is in distinct eigenstates of $\wh{T}$ at discrete times $t_0<t_1<t_2\ldots$ with zero chance to return to a previous eigenstate.

But in fact their proof shows much more than stated.

\begin{theorem}
\label{thm:discrete_change}
Consider a quantum system evolving from the state $\ket{s_1}$ at the time $t_1$ to $\ket{s_2}$ at $t_2>t_1$, possibly suffering a projection, \ie $\bra{s_2}\wh{U}(t_2-t_1)\ket{s_1}\neq 0$.
If $\ket{s_1}$ and $\ket{s_2}$ are perfectly distinguishable and the probability that the system returns to the state $\ket{s_1}$ after $t_2$ is $0$, then it cannot have a ground state.
\end{theorem}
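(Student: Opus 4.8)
The plan is to reuse the analytic‑continuation argument of Unruh and Wald \cite{UnruhWald1989TimeAndTheInterpretationOfCanonicalQuantumGravity}, observing that it involves only two transition amplitudes and never actually needs a discrete family of eigenstates. Suppose, for contradiction, that $\wh H$ has a ground state. Then its spectrum is bounded below, and after subtracting a constant — which only multiplies $\wh U(t)$ by a global phase and changes no transition probability — we may assume $\wh H\ge 0$. Introduce the scalar function
\begin{equation}
\label{eq:UW-ampl}
g(\tau):=\bra{s_1}\wh U(\tau)\ket{s_2},\qquad \tau\in\R .
\end{equation}
The hypothesis that the return probability to $\ket{s_1}$ vanishes for all times after $t_2$ says exactly that $\bra{s_1}\wh U(t-t_2)\ket{s_2}=0$ for every $t>t_2$, i.e.\ $g(\tau)=0$ for all $\tau>0$. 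On the other hand, at $\tau=t_1-t_2<0$ one has $g(t_1-t_2)=\overline{\bra{s_2}\wh U(t_2-t_1)\ket{s_1}}\ne 0$ by the non‑vanishing amplitude assumed in the statement; this is the only place where the possible projection at $t_2$ enters, since it is precisely encoded in the condition $\bra{s_2}\wh U(t_2-t_1)\ket{s_1}\ne 0$. Thus $g$ vanishes on the whole positive half‑line yet is not identically zero.

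The next step is to exhibit $g$ as the boundary value of a bounded holomorphic function on the lower half‑plane. Writing $\wh H=\int_{[0,\infty)}E\,\de\wh P_E$, set
\begin{equation}
\label{eq:UW-lap}
G(z):=\int_{[0,\infty)}e^{-\frac{i}{\hbar}zE}\,\de\mu(E),\qquad \de\mu(E):=\bra{s_1}\de\wh P_E\ket{s_2},
\end{equation}
with $\mu$ a complex measure of total variation at most $\|s_1\|\,\|s_2\|=1$. For $\im z\le 0$ we have $|e^{-\frac{i}{\hbar}zE}|=e^{\frac{1}{\hbar}E\,\im z}\le 1$, so \eqref{eq:UW-lap} converges, $|G(z)|\le 1$, $G$ is continuous on $\{\im z\le 0\}$ by dominated convergence, holomorphic on $\{\im z<0\}$ (differentiate under the integral), and $G(\tau)=g(\tau)$ on the real axis. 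No domain hypotheses on $\ket{s_1},\ket{s_2}$ are needed, as the integrand stays bounded throughout.

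Finally I would invoke the standard fact that a bounded holomorphic function on a half‑plane that is not identically zero cannot have boundary values vanishing on a set of positive Lebesgue measure — equivalently, a nonzero element of the Hardy space $H^\infty$ of the half‑plane has nonzero boundary values almost everywhere (one may also phrase this through the F.\ and M.\ Riesz theorem applied to $\mu$, whose Fourier transform $G$ has, in the frequency variable, support on a half‑line). Since $G$ vanishes on $(0,\infty)$, we conclude $G\equiv 0$, hence $g\equiv 0$, contradicting $g(t_1-t_2)\ne 0$. Therefore $\wh H$ admits no ground state.

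The main obstacle — indeed essentially the only nonroutine point — is the last step: citing and applying the Hardy‑space / F.\ and M.\ Riesz uniqueness result correctly, together with making the passage from the operator statement to the scalar analytic object \eqref{eq:UW-lap} fully rigorous (convergence of the spectral integral, continuity up to the real boundary, and the identification of the boundary value with $g$). The reduction to $\wh H\ge 0$, the translation of the physical hypotheses into properties of $g$, and the accounting for the projection at $t_2$ are all straightforward.
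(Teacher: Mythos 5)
Your proof is correct and follows essentially the same route as the paper's (the Unruh--Wald argument): the amplitude $\bra{s_1}\wh U(\tau)\ket{s_2}$ extends to a bounded holomorphic function on the lower half-plane because the spectrum is bounded below, its vanishing for $\tau>0$ forces it to vanish identically, contradicting the nonzero amplitude at $\tau=t_1-t_2$. You merely make explicit (via the spectral measure, the bound on its total variation, and the $H^\infty$/F.~and~M.~Riesz uniqueness step) what the paper delegates to the citations of Unruh--Wald and Streater--Wightman.
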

\begin{proof}
It is the same proof as for Unruh \& Wald's Theorem \cite{UnruhWald1989TimeAndTheInterpretationOfCanonicalQuantumGravity}.
If $\wh{H}$ has a ground state, the function $f:\C\to\C$,
\begin{equation}
\label{eq:uw_function}
f(z):=\bra{s_1}e^{-\frac{i}{\hbar}z\wh{H}}\ket{s_2}
\end{equation}
is holomorphic in the lower-half complex plane \cite{StreaterWightman1964PCTSpinStatisticsAndAllThat}.
The condition that the system cannot return to $\ket{s_1}$ after $t_2$ is $f(t)=0$ for all real $t>0$.
This implies that $f(z)=0$ in the closed lower half-plane, where $f$ is holomorphic. Hence, $f(t)=0$ for any $t\in\R$.
Then, $\bra{s_2}\wh{U}(t)\ket{s_1}=\bra{s_1}\wh{U}(-t)\ket{s_2}^\ast=f^\ast(-t)=0$ for all $t>0$, contradicting the assumption that $\bra{s_2}\wh{U}(t_2-t_1)\ket{s_1}\neq 0$.
\end{proof}

Therefore, even if we weaken the condition of irreversibility of change from Theorem \ref{thm:translation} to a bare minimum, it still requires that there is no ground state.

Another way we may try to circumvent the dilemma is to use \emph{positive operator-valued measures} (POVM). POVM can discriminate nonorthogonal quantum states, but imperfectly, so we have to accept a nonnegligible probability to fail to distinguish them, and to  know what states to expect \cite{Ivanovic1987HowToDifferentiateBetweenNonorthogonalStates,Dieks1988OverlapAndDistinguishabilityOfQuantumStates}.
Since we can know when the measurement failed, we can try again on another copy of the system until we succeed. But this does not help when we monitor changes at the macro level of the world, because we cannot prepare multiple copies of the world itself.

An interesting result based on POVM is the theoretical construction of a clock whose Hamiltonian has a ground state, see \eg \cite{Holevo2011ProbabilisticAndStatisticalAspectsOfQuantumTheory} and more recently \cite{LoveridgeMiyadera2019RelativeQuantumTime,HohnSmithLock2021TrinityOfRelationalQuantumDynamics}.
The resulting POVM is translation-invariant.
Unfortunately, if the POVM is not projective, which is always the case when there is a ground state, the clock states overlap.

POVM clocks are very simple closed systems, but we can hope that the idea can be used to obtain more complex systems.
Omn{\`e}s \cite{Omnes1997QuantumClassicalCorrespondenceUsingProjectionOperators} did this to represent the macro states of the world as \emph{quasiprojection operators}.
But unlike the genuine projectors representing macro states, the quasiprojection operators overlap, preventing the irreversibility of change at the macro level.

Using POVM can be very useful, but the existence of overlapping implies that the system has a nonzero probability to return to a previous macro state.
And this means the violation of irreversibility of change.

%%%%%%%%%%%%%%%%%%%%%%%%%%%%%%%%%%
\section{Is collapse an exit?}
\label{s:no_exit}

Theorem \ref{thm:translation} assumed unitary evolution, and the result applies to the subspace $\mc{V}:=\Span\{\ket{\psi(t)}|t\in\R\}$.
But if the wavefunction collapses or, depending on the interpretation, it branches due to decoherence, the state vector leaves the subspace $\mc{V}$.
Can collapse avoid the consequences of Theorem \ref{thm:translation} by reaching a larger subspace of the Hilbert space where the Hamiltonian is not \eqref{eq:time_translation_Hamiltonian}?

First, even if the Hamiltonian is different from $-i\hbar\frac{\partial\ }{\partial\tau}$ outside of $\mc{V}$, since its restriction to $\mc{V}$ does not have a ground state, the full Hamiltonian also does not have a ground state.
This already shows that the dilemma cannot be avoided even if the state vector exits $\mc{V}$.

Second, while wavefunction collapse or decoherence can allow the state vector to leave the subspace $\mc{V}$, Theorem \ref{thm:translation} applies as well to the branch that exited $\mc{V}$. Therefore, collapse or branching only extends the subspace where Theorem \ref{thm:translation} applies and the Hamiltonian is $-i\hbar\frac{\partial\ }{\partial\tau}$.

To see this, let us consider what happens during the measurement of a system $S$ with a measuring device $M$. Let the observable whose value we measure on $S$ have the possible eigenstates $\ket{\lambda}_S$ with eigenvalues $\lambda$. The measuring device $M$ is assumed to start in the ``ready'' state $\ket{\tn{ready}}_{M}$, and to end in one of the states $\ket{\tn{result=}\lambda}_{M}$ depending on the outcome $\lambda$, resulting in a superposition
\begin{equation}
\label{eq:measurement}
\ket{\psi}_{S}\ket{\tn{ready}}_{M} \stackrel{\wh{U}}{\mapsto} \sum_{\lambda}\braket{\lambda}{\psi}_{S}\ket{\lambda}_{S}\ket{\tn{result=}\lambda}_{M}.
\end{equation}

The observer should perceive only one possible outcome. This is usually achieved by assuming a projection, so that only one term in the superposition in the RHS of eq. \eqref{eq:measurement} remains. Alternatively, we can assume that the total wavefunction decohered into branches containing instances of the observer, each observing only the outcome from that branch, as in the Everett approach \cite{Everett1957RelativeStateFormulationOfQuantumMechanics}.
But, for any $\lambda$, $\braket{\tn{result=}\lambda}{\tn{ready}}_M=0$.
This implies that all terms in the RHS of eq. \eqref{eq:measurement} are orthogonal to the initial state, and so is their superposition,
\begin{equation}
\label{eq:measuring_ortho_tot}
\Big(\sum_{\lambda}\braket{\psi}{\lambda}_{S}\bra{\lambda}_{S}\bra{\tn{result=}\lambda}_{M}\Big)\ket{\psi}_{S}\ket{\tn{ready}}_{M}=0.
\end{equation}

Therefore, even after the measurement, Theorem \ref{thm:translation} continues to apply to the subspace $\mc{V}$ spanned by the orbit of the initial state vector under the action of the unitary evolution group. In addition, it also applies to the subspaces spanned by the orbits of the components of the original state vector resulting from projection (or branching), and to the space spanned by the orbits of all of these components, which includes $\mc{V}$.

Since new orbits result with each measurement, the applicability domain of Theorem \ref{thm:translation} extends to the Hilbert subspace spanned by all these orbits, and on this subspace, either the Hamiltonian is again $-i\hbar\frac{\partial\ }{\partial\tau}$, or irreversibility of change is violated.

%%%%%%%%%%%%%%%%%%%%%%%%%%%%%%%%%%
\section{Objections against translational quantum systems}
\label{s:objections}

In this Section I briefly discuss some known objections against the Hamiltonian $\wh{H} = -i\hbar\frac{\partial\ }{\partial\tau}$, following \cite{Stoica2022VersatilityOfTranslations}.

\begin{objection}
\label{obj:simple}
The Hamiltonian $-i\hbar\frac{\partial\ }{\partial\tau}$ is too simple to represent physically realistic quantum systems.
\end{objection}
\begin{reply}
\label{reply:simple}
In \cite{Stoica2022VersatilityOfTranslations} I showed that there are numerous examples of quantum systems with this Hamiltonian:

\begin{example}
\label{tqs:measurement}
Any closed system consisting of a measuring device and an observed system realizing the \emph{standard model of ideal (projective) measurements}, as described in \cite{BuschGrabowskiLahti1995OperationalQuantumPhysics} \S II.3.4 and \cite{Mittelstaedt2004InterpretationOfQMAndMeasurementProcess} \S 2.2(b). This is limited to the time during which the measurement takes place, for systems that do not have free evolution. Although in a basis which is not consistent with the factorization corresponding to the subsystems, the Hamiltonian is $-i\hbar\frac{\partial\ }{\partial\tau}$.
\end{example}

\begin{example}
\label{tqs:clock}
Any world containing a closed subsystem with the Hamiltonian \eqref{eq:time_translation_Hamiltonian}. This can be for example an ideal clock or a sterile massless fermion in a certain state described in \cite{Stoica2022VersatilityOfTranslations}.
\end{example}

\begin{example}
\label{tqs:dyn_sys}
The Koopman \cite{Koopman1931HamiltonianSystemsAndTransformationInHilbertSpace} quantum representation of any deterministic time-reversible dynamical system without time loops.
In \cite{Stoica2022VersatilityOfTranslations} I also showed that all the properties of the original dynamical system and its evolution equations are faithfully encoded as quantum observables.
\end{example}

All these systems have the same simple Hamiltonian \eqref{eq:time_translation_Hamiltonian}.
But the correspondence between observables and physical properties can make them very different, ensuring unlimited diversity and complexity \cite{Stoica2022VersatilityOfTranslations}.
There seems to be no limit to how similar to the real world such a system can be.
\qed
\end{reply}

\begin{objection}
\label{obj:infinite}
Without a ground state, the system decays indefinitely towards negative energy states.
One can imagine using such systems to generate free energy.
\end{objection}
\begin{reply}
\label{reply:infinite}
Consider Example \ref{tqs:dyn_sys}.
If the original dynamical system does not decay indefinitely, its quantum representation, which represents faithfully the evolution of the original dynamical system, does not decay indefinitely either,
despite having the Hamiltonian \eqref{eq:time_translation_Hamiltonian}.
Also such systems cannot be used to extract infinite amounts of free energy, for example by charging infinitely many batteries.
If this were true, this would imply that we could do the same with deterministic time-reversible dynamical systems without time loops, but this is not possible.

From Example \ref{tqs:clock}, given a quantum system $R$ composed of a quantum system with a ground state and an ideal clock $C$ which do not interact, the total system has the Hamiltonian \eqref{eq:time_translation_Hamiltonian}. Yet, the system $R$ and the clock $C$ do not decay indefinitely and cannot be used to extract infinite amounts of free energy. A similar conclusion follows for the case of the massless sterile fermion.
\qed
\end{reply}

\begin{objection}
\label{obj:negative}
There are strong indications that in our world the Hamiltonian has a ground state, unlike \eqref{eq:time_translation_Hamiltonian}.
\end{objection}
\begin{reply}
\label{reply:negative}
There are such indications. On the other hand,

-- Objection \ref{obj:negative}, raised \eg by Pauli \cite{Pauli1980GeneralPrinciplesOfQuantumMechanics} against the existence of a self-adjoint time operator $\wh{\tau}$ which is the canonical conjugate of the Hamiltonian, was based on nonrelativistic Quantum Mechanics.
While quantizing the relativistic electron, Dirac found a Hamiltonian without a ground state. He introduced the \emph{Dirac sea} to prevent the decay into infinite negative energy states \cite{Dirac1930ATheoryOfElectronsAndProtons}.
The modern view is to reinterpret the creation/annihilation operators for negative energy particles as creation/annihilation operators for positive energy antiparticles.

-- Relativity suggests that if there are position operators, there should be time operators as well.

-- Canonical quantization of gravity is based on the Hamiltonian formulation of General Relativity \cite{adm2008admRepublication}, whose Hamiltonian does not have a ground state \cite{Dewitt1967QuantumTheoryOfGravityI_TheCanonicalTheory,UnruhWald1989TimeAndTheInterpretationOfCanonicalQuantumGravity}.

-- The Page-Wootters formalism for the Wheeler-DeWitt \emph{constraint equation} \cite{Dewitt1967QuantumTheoryOfGravityI_TheCanonicalTheory}, uses a clock with the Hamiltonian \eqref{eq:time_translation_Hamiltonian} \cite{PageWootters1983EvolutionWithoutEvolution}. In \cite{Stoica2022VersatilityOfTranslations} I show that the total system has the same Hamiltonian \eqref{eq:time_translation_Hamiltonian} (Example \ref{tqs:clock}).

-- Hegerfeldt showed that, if its Hamiltonian has a ground state, a free wavefunction starting in a bounded region of space would spread instantaneously in the whole space, contradicting Special Relativity \cite{Hegerfeldt1994CausalityProblemsForFermisTwoAtomSystem}.

-- Theorem \ref{thm:discrete_change} shows that the existence of a ground state violates irreversibility of change, while the Hamiltonian $\wh{H} = -i\hbar\frac{\partial\ }{\partial\tau}$ ensures it in its strongest form.

-- Examples \ref{tqs:measurement}, \ref{tqs:clock}, and \ref{tqs:dyn_sys} include quantum systems that can be very similar to what we know so far about or world, and yet their Hamiltonian is \eqref{eq:time_translation_Hamiltonian}.
\qed
\end{reply}

\begin{remark}
\label{rem:measurements}
The standard model of quantum measurements described in \cite{BuschGrabowskiLahti1995OperationalQuantumPhysics} \S II.3.4 involves ideal (projective) but also generalized (POVM) measurements.
In \cite{BuschGrabowskiLahti1995OperationalQuantumPhysics} \S VII this model is illustrated with many physically realistic realizations, including for the Stern-Gerlach experiment and various experiments with photons. 
But Example \ref{tqs:measurement} does not necessarily show that the Hamiltonian is \eqref{eq:time_translation_Hamiltonian} in these experiments, because in most cases measurements are not ideal.
What Example \ref{tqs:measurement} shows is that the Hamiltonian \eqref{eq:time_translation_Hamiltonian} is consistent with ideal quantum measurements, but not necessarily that it describes physically realistic measurements.
\end{remark}

%%%%%%%%%%%%%%%%%%%%%%%%%%%%%%%%%%
\section{Conclusions}
\label{s:conclusions}

We have seen that the existence of a ground state is incompatible with irreversible changes.
Both these properties seem to be important and desirable.
However, in my modest analysis I did not find a decisive proof for one or the other of the two options. There are reasons to believe that the real-world Hamiltonian has a ground state, but we have seen that this would be incompatible even with weaker forms of irreversibility.
Another objection against the Hamiltonian $\widehat{H}=-i\hbar\frac{\partial\ }{\partial\tau}$ is that it is too simple to describe a complex world like ours. But we have seen that it is definitely versatile enough to faithfully describe
the quantum representation of any deterministic time-reversible dynamical system without time loops, regardless its complexity,
and also any world containing an ideal clock or a sterile massless fermion in a certain state. In addition, the Hamiltonian of the standard model of ideal quantum measurements is of this form.

I list several questions that remain to be answered:

-- Can Hamiltonian \eqref{eq:time_translation_Hamiltonian} be rejected by experiments? If it is refuted, the price is that we can no longer be certain of our own macro state, so how can we still trust our measurements that led to its refutation?

-- Can an experiment prove violations of irreversibility of change?
This seems impossible, because even if such a violation will occur, we will not notice it, since our memories will be updated accordingly, as part of the macro state.
Since we are talking about the macro state of the entire universe, there is no other system with respect to which to identify that the macro state reversed to a past macro state.

-- Is this Hamiltonian consistent with the Standard Model of particle physics and with the theory that will result someday by incorporating gravity and the explanation of dark matter and dark energy?

-- Does this result have relevant implications, or it seems important only because of an exaggerated desire to have certainties about the macro world?

%------------------------------------------------------------%
\textbf{Acknowledgement} The author thanks Basil Altaie, Almut Beige, Eliahu Cohen, Ismael Paiva, Ashmeet Singh, and Michael Suleymanou, for their valuable comments and suggestions offered to a previous version of the manuscript. Nevertheless, the author bears full responsibility for the article.

%------------------------------------------------------------%
%\bibliographystyle{plain}
%\bibliography{../../bib/references}

\end{document}